\newtheorem{lemma}{Lemma}
\newtheorem{theorem}{Theorem}
\newtheorem{corollary}{Corollary}
\newcommand{\M}{\mathcal M}
\begin{document}

\title{Memory size bounds of prefix DAGs}
\author{\IEEEauthorblockN{%
   J\'anos Tapolcai, G\'abor R\'etv\'ari, Attila Kőrösi}

 \IEEEauthorblockA{%
MTA-BME Future Internet Research Group,
High-Speed Networks Laboratory (HSN\emph{Lab}),\\ 
Dept. of Telecommunications and Media Informatics,
Budapest University of Technology,\\
   Email: \{tapolcai, retvari,korosi\}@tmit.bme.hu}
}
\maketitle
\begin{abstract}
In this report an entropy bound on the memory size is given for a compression method of leaf-labeled trees. The compression converts the tree into a Directed Acyclic Graph (DAG) by merging isomorphic subtrees.
\end{abstract}
%\emph{Version: \today}

\section{Coupon Collector's Problem with Arbitrary Coupon Probabilities}

Given a set of $C$ coupons, where $\delta=|C|$ denotes the number of coupons. At each draw  $p_o$ denotes the probability for getting coupon $o$ for $o\in C$. We draw $m$ coupons, and let $E$ denote the expected number of different coupons we have obtained. The task is to give an upper bound on  $E$. 

%\section{Analysis}

Let $V$ denote the set of coupons we have after $m$ draw. The probability of having coupon $o$ in $V$ is 
\begin{equation} \label{eq_Pnintrie}
P(o\in V)=1-(1-p_o)^{m}
\end{equation}

Thus the expected cardinality of $V$ is 
\begin{multline} \label{sum_alter}
E(|V|)=\sum_{o\in C}E(I(o\in V) = \sum_{o\in C}P(o\in V) =\\
= \sum_{o\in C} (1-(1-p_o)^m) 
%= \sum_{o\in N} \sum_{t=1}^{m} \binom{m}{t} (-1)^{t-1} p_o =\\
%= \sum_{t=1}^{m} \binom{m}{t} (-1)^{t-1} \left(\sum_{o\in N} p_o^t\right)
\end{multline}

 Let $H_C$ denote the entropy of the coupon distribution
\begin{equation}\label{lemma_entropy}
H_C = \sum_{o\in C} p_o \log_2 \frac{1}{p_o}
\end{equation}

\begin{lemma}\label{couponlemma}
\[E(V) \leq \min\left\{\frac{m}{\log_2(m)}\cdot H_C+3,m,n\right\} \]
\end{lemma}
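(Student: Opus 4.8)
The plan is to bound each of the three quantities in the minimum separately, disposing of the two elementary ones first. Starting from the exact expression $E(|V|)=\sum_{o\in C}\bigl(1-(1-p_o)^m\bigr)$ of \eqref{sum_alter}: since every summand is at most $1$ and there are $|C|=n$ of them, $E(|V|)\le n$; and since Bernoulli's inequality gives $(1-p_o)^m\ge 1-mp_o$, each summand is at most $mp_o$, so summing and using $\sum_{o\in C}p_o=1$ yields $E(|V|)\le m$. The per-coupon inequality $1-(1-p_o)^m\le\min\{1,mp_o\}$ is also the workhorse for the entropy bound.

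For the entropy bound I would compare, coupon by coupon, the summand $f(p_o):=1-(1-p_o)^m$ against the scaled per-coupon entropy $\tfrac{m}{\log_2 m}\gamma(p_o)$, where $\gamma(p):=p\log_2\tfrac1p$ is the summand of $H_C$ in \eqref{lemma_entropy}. The key elementary facts I would record are that $\gamma$ increases on $(0,1/e)$ and decreases on $(1/e,1)$, and that $p=1/m$ satisfies $\gamma(1/m)=\tfrac{\log_2 m}{m}$ exactly. Using these I would split $C$ into three parts: $L=\{o:p_o\le 1/m\}$, $M=\{o:p_o>1/m,\ \gamma(p_o)\ge\tfrac{\log_2 m}{m}\}$, and $H=\{o:p_o>1/m,\ \gamma(p_o)<\tfrac{\log_2 m}{m}\}$.

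On $L$ I would use $f(p_o)\le mp_o$ together with $\log_2\tfrac1{p_o}\ge\log_2 m$ to get $f(p_o)\le mp_o\le\tfrac{m}{\log_2 m}\gamma(p_o)$, so these summands are already covered by the scaled entropy. On $M$ the scaled entropy alone beats the trivial bound, $\tfrac{m}{\log_2 m}\gamma(p_o)\ge 1\ge f(p_o)$, so these too are covered. Only $H$ can leave an uncovered surplus, bounded crudely by $f(p_o)\le 1$ per coupon; the crucial point is that every $o\in H$ has $p_o>1/e$, because if $1/m<p_o\le 1/e$ then monotonicity of $\gamma$ on $(0,1/e)$ would force $\gamma(p_o)>\gamma(1/m)=\tfrac{\log_2 m}{m}$, contradicting $o\in H$ (for $m=2$ one has $1/m=1/2>1/e$ outright). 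Since $\sum_{o\in H}p_o\le 1$ and each exceeds $1/e$, we get $|H|<e$, i.e.\ $|H|\le 2$. Adding the three parts and using $\gamma\ge 0$ gives $E(|V|)\le\tfrac{m}{\log_2 m}\sum_{o\in C}\gamma(p_o)+|H|\le\tfrac{m}{\log_2 m}H_C+2$, comfortably within the stated $+3$.

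The main obstacle is the analysis of the high-probability part $H$: one has to argue both that its coupons sit on the decreasing branch of $\gamma$ (so that their probability mass, not their entropy, is what matters) and that their number is an absolute constant. The decisive input here is nothing more than $\sum_{o\in C}p_o=1$ together with the threshold $p_o>1/e$, which caps $|H|$ by $2$ and supplies the additive slack; the generous constant $3$ in the statement absorbs any small-$m$ boundary effects so that the bound holds uniformly for every integer $m\ge 2$.
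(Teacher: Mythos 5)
Your proof is correct, and while it shares the paper's overall architecture, it takes a genuinely different and more robust route through the hard case. Both arguments bound $E(|V|)=\sum_{o}(1-(1-p_o)^m)$ coupon by coupon against $\tfrac{m}{\log_2 m}\,p_o\log_2\tfrac1{p_o}$ and absorb the constantly many coupons with $p_o>1/e$ (each contributing at most $1$) into the additive constant; the paper even splits at exactly your threshold, writing $m\gtrless 1/p_o$ where you write $p_o\lessgtr 1/m$. In the regime $p_o\geq 1/m$ the paper notes that $\tfrac{m}{\log_2 m}$ is increasing in $m$ for $m>e$ and substitutes $m=1/p_o$ to make the right-hand side equal to $1$, hence at least the left-hand side; you reach the same conclusion via monotonicity of $\gamma(p)=p\log_2(1/p)$ on $(0,1/e)$ together with the exact evaluation $\gamma(1/m)=\tfrac{\log_2 m}{m}$ --- monotonicity in $p$ rather than in $m$, same effect. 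The real divergence is the regime $p_o<1/m$ (your set $L$): the paper substitutes $x=\log_{1/p_o}m$, reorders to $(1-p_o)^{1/p_o^x}\geq 1-\tfrac{1}{xp_o^{x-1}}$, raises both sides to the power $p_o^{x-1}$, and invokes the limiting behavior of $(1-p)^{1/p}$ --- a delicate chain in which several steps (dropping the $\tfrac1x>1$ factor, exponentiating a quantity that may be negative) are asserted without justification. Your one-line replacement --- Bernoulli's inequality $1-(1-p_o)^m\leq mp_o$ combined with $\log_2(1/p_o)\geq\log_2 m$ --- is strictly simpler and airtight. Your accounting is also slightly sharper: from $\sum_o p_o=1$ and $p_o>1/e$ on the exceptional set $H$ you get $|H|\leq 2$, yielding an additive constant $2$ where the paper settles for $3$ (the paper's ``at most $3>\tfrac1e$ coupons'' remark is garbled; the correct count is indeed $\lfloor e\rfloor=2$), and your argument covers $m=2$ whereas the paper restricts to $m\geq 3$. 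In short: same skeleton (per-coupon comparison plus a constant-size heavy-coupon correction), but your middle step replaces the paper's fragile exponent manipulation with elementary, fully rigorous inequalities, at no cost in the strength of the bound.
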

for $m\geq 3$.
\begin{proof}
Trivially holds that $E \leq m$ and $E\leq n$. Next, let us expand 
\begin{equation}
\sum_{o\in C} (1-(1-p_o)^m) \leq \frac{m}{\log_2(m)} \sum_{o\in C} p_o \log_2 \frac{1}{p_o} +3
\end{equation}
The above inequality holds if the inequality holds for each $o\in C$. Thus next we prove  that 
\begin{equation}\label{eachp}
1-(1-p_o)^m \leq \frac{m}{\log_2(m)} p_o \log_2 \frac{1}{p_o} \quad \textrm{if }p_o< \frac 1e
\end{equation}
holds for $p_o \leq \frac{1}{e}$. Let us assume $m\geq\frac1p$. Note that the right hand size is a monotone increasing function of $m$, when $m>e$. Thus we can substitute $m=\frac1p$ if $\frac1p>e$ in the right hand side and we get
\begin{equation}\label{eachp3}
1-(1-p_o)^m \leq \frac{1/p_o}{\log_2(1/p_o)} p_o \log_2 \frac{1}{p_o} = 1.
\end{equation}
In the rest of the proof we focus on the other case, which is $m<\frac1{p_o}$. Let us define $1>x>0$ as
\begin{equation}\label{def-x}
x:=\log_{\frac1{p_o}} m \enspace .
\end{equation}
After substituting $m=\frac1{p_o^x}$ we have
\begin{multline} 
1-(1-p_o)^{\frac1{p_o^x}} \leq \frac{\frac1{p_o^x}}{\log_2\left({\frac1{p_o^x}}\right)} p_o \log_2 \left(\frac{1}{p_o}\right) =\\= \frac{\frac1{p_o^x}}{x \log_2\left({\frac1{p_o}}\right)} p_o \log_2 \left(\frac{1}{p_o}\right) = \frac{\frac1{p_o^x}}{x} p_o = \frac1{xp_o^{x-1}} \enspace ,
\end{multline}
which can be reordered as
\begin{equation} 
(1-p_o)^{\frac1{p_o^x }}\geq 1 - \frac1{xp_o^{x-1}} \enspace.
\end{equation}
Taking the $p_o^{x-1}>0$ exponent of both sides we get
\begin{equation} \label{reducedeq}
(1-p_o)^{\frac1{p_o}}\geq \left({1 - \frac1{xp_o^{x-1}}}\right)^{p_o^{x-1}}  .
\end{equation}
Note that $x<1$, thus $\frac1x>1$, and we can prove
\begin{equation} 
(1-p_o)^{\frac1{p_o}}\geq \left({1 - \frac1{p_o^{x-1}}}\right)^{p_o^{x-1}}  .
\end{equation}
Bernoulli discovered that $(1-p_o)^{\frac1{p_o}}$ is monotone decreasing function and equals to $\frac1e$ for $p_o \to 0$ \cite{}. Thus the inequality holds if 
\begin{equation} 
p_o \leq \frac1{p_o^{x-1}}  .
\end{equation}
which holds because
\begin{equation} 
p_o^x \leq 1.
\end{equation}
This proves \eqref{eachp} with the assumption of $p_o < \frac{1}{e}$. There are at most $3>\frac1e$ coupons for which \eqref{eachp} cannot be applied, but the expected number of these coupons is still at most $3$.
\end{proof}

\section{Trie-folding}

For IP address lookup a binary trie is used, where each leaf has a label called next hop. To compress the trie we will use trie-folding, which merges the sub-tries with exactly the same structure and next hops labels at each leaf instead of repeating it in the binary trie. after the process the trie is transformed into a DAG. See an example below
\begin{center}\begin{tikzpicture}[level distance=20pt]
\Tree [  [ a [ b c ] ] [ [ b c ] d ] ]
 \node at (1.7,-1) {$\Rightarrow$};
\begin{scope}[shift={(3cm,0)}]
\Tree [ [ a [.\node(l2){.}; b c ] ] [.\node(l1){.};  \edge[white]; {} d ] ]
\end{scope}
\draw[->] (l1) --  (l2);
\end{tikzpicture}\end{center}

We evaluate the efficiency of the trie-folding methods on a randomly generated trie, where the next hops follow a given distribution. The randomly generated trie is denoted by $T=(V_T,E_T)$ and has the following properties
\begin{description}
\item[$h$] is the height bound of the trie, typically $24$ in IPv4. %the trie is not compressed and 
\item[$\delta$] is the set of next hops.%, and $m=|N|$.
\item[$p_i$] is the probability that an IP address is forwarded to next hop $i\in N$.
\end{description}

Let $V^j_T$ denote the set of nodes in $T$ at the $j$ level for $1\leq j\leq h$. The level of a node is $h$ minus the hop count of the  path to the root node. Thus the root node has level $h$. At the $j$ the level there are $2^{h-j}$ nodes, formally  $|V^j_T|=2^{h-j}$. Each node at the $j$-th level has $2^{j+1}$ child node, and eventually $2^j$ leafs each of which is assigned with a next hop.
\begin{center}\begin{tikzpicture}[level distance=20pt]
\Tree [.{level h=3}  [.2 1 1  ] [.2 1 1 ] ]
 \end{tikzpicture}\end{center}

%\subsection{Trie-folding}

The DAG resulted by the trie-folding method is denoted by $D=(V_D,E_D)$, and $V^j_D$ denotes the set of nodes in $D$ at the $j$ level for $1\leq j\leq h$.

\begin{lemma}\label{lemma:vdj-ub}
The expected  number of \textbf{Nodes} at the $j$-th level in a DAG resulted by trie-folding of a randomly generated trie with height $h$ and next hop distribution $p_1,\dots,p_N$  is at most
\begin{equation}\label{eq:vjd-ub}
E(|V^j_D|) \leq \min\left\{\frac{H_O}{h-j} 2^{h} + 3, 2^{h-j}, \delta^{2^j} \right\} 
\end{equation}
where $H_O$ denotes the entropy of the next hops
\begin{equation}\label{lemma_entropy}
H_O = \sum_{o\in N} p_o \log_2 \frac{1}{p_o}.
\end{equation}
\end{lemma}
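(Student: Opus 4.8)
The plan is to recognize the level-$j$ node count of the DAG as an instance of the coupon-collector process of Lemma~\ref{couponlemma} and then to feed in the right parameters. First I would argue that after trie-folding every node at level $j$ of $D$ corresponds to a distinct subtree rooted at level $j$ of $T$. Since each level-$j$ node roots a complete binary tree of height $j$ with $2^j$ leaves, all these subtrees share the same \emph{shape}, so two of them are isomorphic precisely when they carry the same next-hop label on each of their $2^j$ leaves. Hence $|V^j_D|$ equals the number of \emph{distinct leaf-labelings} occurring among the $|V^j_T| = 2^{h-j}$ subtrees rooted at level $j$.

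Next I would set up the coupon correspondence. Treat each possible leaf-labeling of a height-$j$ subtree as a coupon; there are at most $\delta^{2^j}$ of them, which plays the role of the coupon count $n$ and supplies the third term of the minimum. The $2^{h-j}$ subtrees are the draws, so $m = 2^{h-j}$, which supplies the second term. Because the leaves of $T$ are labeled independently according to $p_1,\dots,p_N$ and distinct level-$j$ subtrees occupy disjoint leaf sets, the draws are mutually independent and the probability of drawing a fixed labeling $o$ with leaf labels $\ell_1,\dots,\ell_{2^j}$ is $p_o = \prod_{k=1}^{2^j} p_{\ell_k}$. This is exactly the structure needed to justify $P(o\in V)=1-(1-p_o)^m$ and thus to invoke Lemma~\ref{couponlemma}.

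The crux is computing the coupon entropy $H_C$ of this labeling distribution. Since a labeling factorizes as a product of $2^j$ independent leaf choices and Shannon entropy is additive over independent coordinates, I would show $H_C = 2^j H_O$, with $H_O$ the per-leaf next-hop entropy. Substituting $m = 2^{h-j}$, $\log_2 m = h-j$, and $H_C = 2^j H_O$ into the bound $\frac{m}{\log_2 m}H_C + 3$ collapses to $\frac{2^{h-j}}{h-j}\,2^j H_O + 3 = \frac{H_O}{h-j}2^{h} + 3$, matching the first term of \eqref{eq:vjd-ub}; the remaining two terms $2^{h-j}$ and $\delta^{2^j}$ are inherited verbatim from the $m$ and $n$ arguments of the minimum in Lemma~\ref{couponlemma}.

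The step I expect to require the most care is the additivity claim $H_C = 2^j H_O$, because it hinges on the leaf labels within a subtree being genuinely independent so that the joint distribution factorizes; I would state this independence assumption explicitly and then appeal to additivity of entropy for product distributions. A secondary technical point is the hypothesis $m\geq 3$ of Lemma~\ref{couponlemma}, which here reads $2^{h-j}\geq 3$, i.e. $j\leq h-2$; the boundary levels $j\in\{h-1,h\}$ would either be checked directly or absorbed by the trivial term $2^{h-j}$ already present in the minimum.
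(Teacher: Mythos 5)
Your proposal is correct and matches the paper's proof essentially step for step: the paper also casts level-$j$ subtrees as coupons (strings of length $2^j$ over the next-hop alphabet), sets $m=2^{h-j}$ and $C=\delta^{2^j}$, and substitutes $H_C = 2^j H_O$ into Lemma~\ref{couponlemma} to obtain $\frac{H_O}{h-j}2^h+3$. If anything, you are slightly more careful than the paper, which leaves the additivity claim $H_C = H_O 2^j$ unjustified (flagging it with a comment that a reference or lemma is needed), whereas you correctly ground it in the independence of the leaf labels and the additivity of entropy over product distributions, and you also handle the $m\geq 3$ boundary levels that the paper ignores.
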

\begin{proof}
We treat the problem as a coupon collection problem, where each coupon is a subtree with $j$ height and $2^j$ next hops on leafs. In other words each coupon is a string with length $2^j$ on alphabet $\delta$, and we draw $m=2^{h-j}$ coupons. Note that there are $C=\delta^{2^j}$ different coupons. Lemma \ref{couponlemma} gives an upper bound on the number of different coupons, which are the subtrees in this case. Thus we have $|V^j_D| \leq 2^{h-j}$, $|V^j_D| \leq \delta^{j}$ and
\begin{multline}
E(|V^j_D|) \leq \frac{2^{h-j}}{\log_2(2^{h-j})} H_C +3 = \frac{2^{h-j}}{\log_2(2^{h-j})} H_O 2^j+3 \\=\frac{2^h}{h-j} H_O+3,
\end{multline}  
where  $H_C=H_O 2^j$ is the entropy of a $2^j$ long string made of next hops. /* we need to find a reference or add a lemma proving it */
\end{proof}

Let $k^*$ be the row where the bounds take the maximum value for all $j=1,\dots,h$. See also Figure \ref{fig:DAG-shape} as an illustration of the bounds on the width of the DAG given by the above lemma. Such $k^*$ clearly exists, because the bounds by Lemma \ref{lemma:vdj-ub} are decreasing function of $j$ until $2^h-j$ holds, while both $\frac{H_O}{h-j} 2^{h} + 3$ and $\delta^{2^j}$ are monotone increasing functions of $j$.  

We store each pointer for a node in $h-k^*$ bits. Since each node has two child nodes, it can be stored in $2h-2k^*$ bits. At level $k^*$ the bound is 
\begin{multline} 
E(|V^{j}_D|) \leq E(|V^{k^*}_D|) \leq \frac{H_O}{h-k^*} 2^{h} + 3 \leq 2^{h-k^*} \\  j=1,\dots, h
\end{multline}
As each node is stored in $2h-2k^*$ bits we have the following corollary on the width of the DAG.

\begin{corollary}\label{corollary-dag-width}
The expected  number of \textbf{bits} to store the nodes at any level $j=1,\dots, h$ in the DAG resulted by trie-folding of a randomly generated trie with height $h$ and next hop distribution $p_1,\dots,p_N$  is at most $$\M=2 H_O 2^{h} + 6h.$$
\end{corollary}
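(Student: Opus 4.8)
The plan is to multiply the per-level node count supplied by Lemma~\ref{lemma:vdj-ub} by the fixed per-node storage cost, and to observe that the awkward factor $(h-k^*)$ cancels against the denominator of the entropy bound. The whole argument is essentially one arithmetic identity, so I would keep it short and focus on setting up the two inputs correctly.

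First I would fix the per-node cost. Once $k^*$ is determined from the bounds (it is a function of $h$, $\delta$ and $H_O$ only, not of the random realization of the trie), every node stores two child pointers, each occupying $h-k^*$ bits, hence $2h-2k^*$ bits per node. This quantity is deterministic, so by linearity of expectation the expected number of bits used at level $j$ is exactly $E(|V^j_D|)\cdot(2h-2k^*)$. Next I would invoke the defining property of $k^*$: since it is the row maximizing the width bound, $E(|V^j_D|)\le E(|V^{k^*}_D|)$ for every $j$, and by the displayed inequality preceding the corollary $E(|V^{k^*}_D|)\le \frac{H_O}{h-k^*}2^{h}+3$. Substituting and expanding gives
\[
\left(\frac{H_O}{h-k^*}2^{h}+3\right)(2h-2k^*)=2H_O2^{h}+6(h-k^*),
\]
where the leading $(h-k^*)$ cancels the denominator. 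Finally, since $k^*\ge 0$ we have $6(h-k^*)\le 6h$, which yields the claimed bound $\M=2H_O2^{h}+6h$ uniformly over $j$.

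The one step that genuinely requires care, and which I expect to be the main obstacle, is justifying that a uniform pointer width of $h-k^*$ bits actually suffices to address every node at every level. This relies on the second displayed inequality $E(|V^{k^*}_D|)\le 2^{h-k^*}$ together with the maximality of $k^*$: no level holds more nodes than can be indexed by $h-k^*$ bits, so a common pointer width is admissible. Strictly speaking this indexing argument concerns the realized (worst-case) node counts rather than their expectations, whereas the bound in Lemma~\ref{lemma:vdj-ub} controls only the expectation; I would therefore either state the corollary as a bound on the expected cost of the fixed $2h-2k^*$-bit encoding (which is exactly what the linearity step above delivers), or supplement it with a concentration or union-bound argument showing the realized width rarely exceeds $2^{h-k^*}$. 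Apart from this modeling point, the remainder is the routine cancellation shown above.
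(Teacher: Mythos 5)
Your proposal is correct and follows essentially the same route as the paper: the paper likewise bounds $E(|V^j_D|)\le E(|V^{k^*}_D|)\le \frac{H_O}{h-k^*}2^h+3$ at the maximizing level $k^*$, multiplies by the $2h-2k^*$ bits per node, and uses the identical cancellation $\bigl(\frac{H_O}{h-k^*}2^h+3\bigr)(2h-2k^*)=2H_O2^h+6(h-k^*)\le 2H_O2^h+6h$. Your closing caveat---that the fixed $h-k^*$-bit pointer width is only justified by the bound $2^{h-k^*}$ on the \emph{expected} width, so the statement should be read as the expected cost of that fixed encoding---is a point the paper silently glosses over, but it does not change the argument.
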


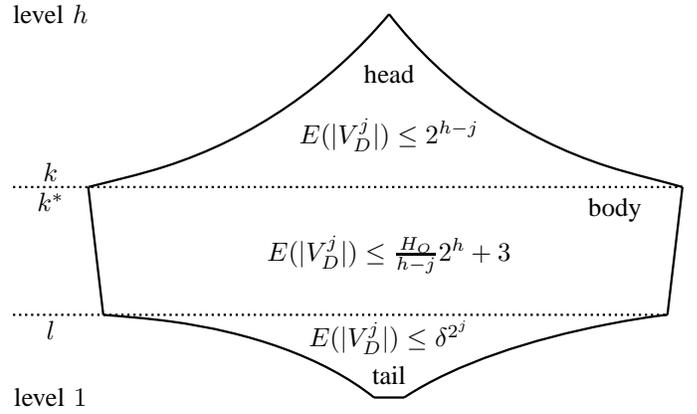
\begin{figure}
\begin{tikzpicture}[x=1.00mm, y=1.00mm, inner xsep=0pt, inner ysep=0pt, outer xsep=0pt, outer ysep=0pt]
\definecolor{L}{rgb}{0,0,0}
\path[line width=0.30mm, draw=L] (70,90) .. controls (62.02,80.52) and (51.82,73.46) .. (40.27,69.75) .. controls (36.78,68.63) and (33.21,67.83) .. (30,67);
\path[line width=0.30mm, draw=L] (70,90) .. controls (77.30,80.52) and (87.50,73.46) .. (99.05,69.75) .. controls (102.54,68.63) and (106.11,67.83) .. (109,67);
\path[line width=0.30mm, draw=L] (30,67) -- (32,50);
\path[line width=0.30mm, draw=L] (68,39) .. controls (61.23,44.41) and (51.30,47.72) .. (41.00,49.11) .. controls (37.70,49.56) and (34.38,49.80) .. (32,50);
\path[line width=0.30mm, draw=L] (68,39) -- (72,39);
\path[line width=0.30mm, draw=L] (72,39) .. controls (79.00,43.60) and (88.89,46.75) .. (99.05,48.74) .. controls (102.11,49.34) and (105.18,49.83) .. (107,50);
\path[line width=0.30mm, draw=L] (109,67) -- (107,50);
\path[line width=0.30mm, draw=L, dash pattern=on 0.30mm off 0.50mm] (20,50) -- (107,50);
\path[line width=0.30mm, draw=L, dash pattern=on 0.30mm off 0.50mm] (20,67) -- (109,67);
\draw(25,48) node{$l$};
\draw(25,69) node{$k$};
\draw(25,65) node{$k^*$};
\draw(25,90) node{level $h$};
\draw(25,39) node{level $1$};
\draw(70,74) node{$E(|V^j_D|) \leq 2^{h-j}$};
\draw(70,58) node{$E(|V^j_D|) \leq \frac{H_O}{h-j} 2^{h} + 3$};
\draw(70,47) node{$E(|V^j_D|) \leq \delta^{2^j}$};
\draw(70,82) node{head};
\draw(100,64) node{body};
\draw(70,42) node{tail};
\end{tikzpicture}%
\caption{\label{fig:DAG-shape} The shape of the bounds on the DAG.}
\end{figure}

Based on this we have the following theorem on the size of the DAG.
\begin{theorem}\label{thm-dag-size-1}
The expected  number of \textbf{bits} to store the nodes in the DAG resulted by trie-folding of a randomly generated trie with height $h$ and next hop distribution $p_1,\dots,p_N$  is at most $$2 h H_O 2^{h} + 6h^2.$$
\end{theorem}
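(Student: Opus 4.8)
The plan is to obtain the total storage by summing the per-level bit count over all $h$ levels and then invoking the uniform per-level bound already established in Corollary \ref{corollary-dag-width}. The total number of bits stored in $D$ decomposes level by level, and since every node across the whole DAG is addressed with a pointer of the same width $h-k^*$ (two pointers per node, hence $2h-2k^*$ bits per node), the bound from the corollary applies identically to each of the $h$ levels. This reduces the theorem to a single summation.

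First, I would write the expected total storage, by linearity of expectation, as
\[
\sum_{j=1}^{h} E(|V^j_D|)\,(2h-2k^*),
\]
so that the total is the sum of the expected per-level storages. Second, I would recall from the derivation preceding Corollary \ref{corollary-dag-width} that each summand is bounded by $\M = 2 H_O 2^{h} + 6h$ uniformly in $j$: this is exactly the content of that corollary, which rests on the fact that $k^*$ is the level maximizing the width bounds of Lemma \ref{lemma:vdj-ub}, so that $E(|V^j_D|) \le E(|V^{k^*}_D|) \le \frac{H_O}{h-k^*}2^{h}+3$ for every $j$. Third, since there are exactly $h$ levels, summing this uniform bound gives
\[
\sum_{j=1}^{h} \M = h\,\M = h\left(2 H_O 2^{h} + 6h\right) = 2h H_O 2^{h} + 6h^2,
\]
which is precisely the claimed bound.

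The substantive work all sits inside Corollary \ref{corollary-dag-width}, which I am free to assume; once that per-level bound is in hand, the theorem follows by a one-line summation over the $h$ levels. The only point I would verify carefully is that the per-level bound is genuinely level-independent, i.e.\ that the same $\M$ dominates the storage at every level $j$. This holds because the pointer width $h-k^*$ is fixed globally and because $k^*$ is chosen as the level attaining the largest width bound, so no level can exceed (up to the constant additive slack) the storage attained at $k^*$. I expect no further obstacle beyond confirming this uniformity.
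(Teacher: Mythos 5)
Your proposal is correct and matches the paper's (implicit) argument exactly: the paper derives Theorem \ref{thm-dag-size-1} directly from Corollary \ref{corollary-dag-width} by multiplying the uniform per-level bound $\M = 2 H_O 2^{h} + 6h$ by the $h$ levels, which is precisely your one-line summation. Your careful check that $\M$ is level-independent (via the fixed pointer width $2h-2k^*$ and the maximality of level $k^*$) is the same justification the paper gives in the displayed inequality preceding the corollary.
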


The lower bound above theorem can be further improved if $H_O\geq \frac{h}{2^h}$.
Let $k$ be the smallest level where $\frac{H_O}{h-k} 2^{h} + 3$ is larger than $2^{h-k}$. Note that, $k^*<k$.  The value of $k$ is 
\begin{equation}\label{k-def}
 k > \lceil{ \log_2 \left( \frac{h}{H_O} \right) }\rceil,
\end{equation}
because
\begin{multline} \label{k-bounds}
2^{h-k}<2^{h-\lceil{\log_2 \left( \frac{h}{H_O} \right) }\rceil} \leq 2^{h-\log_2 \left( \frac{h}{H_O} \right) } =\\= 2^h\frac{H_O}{h}  < \frac{H_O}{h-k} 2^{h} + 3
\end{multline}

Note that, $\log_2 \left( \frac{h}{H_O} \right)\leq h$ when $H_O\geq \frac{h}{2^h}$.

To count the total space needed to store the DAG we divide it into two parts (see also Figure \ref{fig:DAG-shape})
\begin{description}
\item[\emph{head}] for levels $h,\dots,k$, 
\item[\emph{body}] for levels $k-1,\dots,1$.
\end{description}

First we estimate the size of head and use the bound $2^{h-j}$ from \eqref{eq:vjd-ub}. The expected number of bits needed for the DAG at level $j=k,\dots, h$ is
\begin{equation*}\label{size-head-node}
\sum_{j=k}^{h}  E(|V^j_D|) \leq \sum_{j=k}^{h}  2^{h-j} = \sum_{j=0}^{h-k}  2^{j} = 2^{h-k+1} < 2 \frac{H_O}{h-k} 2^{h} + 6 
\end{equation*}
where the last inequality comes from \eqref{k-bounds}. After multiplying with $2h-2k^*$ bits for each node we have 
\begin{multline}\label{size-head-bit}
(2h-2k^*) 2\left( \frac{H_O}{h-k} 2^{h} + 6 \right)<4(h-k) \frac{H_O}{h-k} 2^{h} + 12h =\\= 4 H_O 2^{h} + 12h= 2\M
\end{multline}

For the size of body we use Corollary \ref{corollary-dag-width}.
\begin{multline}\label{size-tail-node}
 \sum_{j=l+1}^{k-1} (2h-2k-2) E(|V^j_D|) \leq (k-1) \M =\\=
\left( \lceil{ \log_2 \left( \frac{h}{H_O} \right) }\rceil -1 \right) \M <
\log_2 \left( \frac{h}{H_O} \right) \M 
\end{multline}
Finally, summing up with \eqref{size-head-bit} we get the following bound.
\begin{theorem}\label{thm-dag-size-2}
The expected  number of \textbf{bits} to store the nodes in the DAG resulted by trie-folding of a randomly generated trie with height $h$ and next hop distribution $p_1,\dots,p_N$  is at most $$\left({ 2+\log_2 (h) - \log_2 H_O }\right) \left( 2 H_O 2^{h} + 6h \right),$$
when $H_O\geq \frac{h}{2^h}$.
\end{theorem}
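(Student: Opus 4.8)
The plan is to assemble the theorem from the head/body decomposition built up in the preceding paragraphs, with the level $k$ of \eqref{k-def} as the pivot. Recall that $k$ is the smallest level at which the entropy bound $\frac{H_O}{h-k}2^h+3$ overtakes the combinatorial bound $2^{h-j}$; hence for the \emph{head} (levels $j=k,\dots,h$) the DAG is still wide and is best controlled by $2^{h-j}$, while for the \emph{body} (levels $j=1,\dots,k-1$) each level is controlled uniformly by $\M$ through Corollary \ref{corollary-dag-width}. Before anything else I would check that this split is legitimate under the hypothesis $H_O\geq h/2^h$: this is exactly the condition ensuring $\log_2(h/H_O)\leq h$, which places $k$ in the valid range $1\leq k\leq h$ so that both the head and the body are well-defined.

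I would then bound the head. Summing the combinatorial estimate over the head levels is a geometric series, $\sum_{j=k}^{h}2^{h-j}=2^{h-k+1}$, and the defining chain \eqref{k-bounds} of $k$ rewrites this node count in entropy form as $2^{h-k+1}<2\bigl(\frac{H_O}{h-k}2^h+3\bigr)$. Multiplying by the per-node storage cost $2h-2k^*$ and cancelling the factor $(h-k)$ against its reciprocal gives the head bound $2\M$ recorded in \eqref{size-head-bit}.

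For the body I would quote Corollary \ref{corollary-dag-width} directly: each of the levels $1,\dots,k-1$ costs at most $\M$ bits, and by the choice of $k$ there are fewer than $\log_2(h/H_O)$ of them, so the body contributes at most $\log_2(h/H_O)\,\M$ as in \eqref{size-tail-node}. Summing the head and body gives $\bigl(2+\log_2(h/H_O)\bigr)\M$; expanding $\log_2(h/H_O)=\log_2 h-\log_2 H_O$ and substituting $\M=2H_O 2^h+6h$ produces the claimed bound.

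The main obstacle I anticipate is the head estimate, not the body. The body is an immediate consequence of the corollary, but the head requires the geometric sum to be tamed precisely by the definition of $k$ and then reconciled with the pointer width $2h-2k^*$. In particular the simplification in \eqref{size-head-bit} relies on the per-node factor collapsing against $\frac{1}{h-k}$, and because $k^*<k$ this cancellation is the one place where the constants have to be tracked with care; that is where I would focus the verification.
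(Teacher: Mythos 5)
Your proposal matches the paper's own proof essentially step for step: the same head/body split at the level $k$ of \eqref{k-def}, the same geometric sum $\sum_{j=k}^{h}2^{h-j}=2^{h-k+1}$ tamed by \eqref{k-bounds} and multiplied by $2h-2k^*$ to give $2\M$ for the head, the same invocation of Corollary \ref{corollary-dag-width} to charge each of the $k-1<\log_2(h/H_O)$ body levels at most $\M$ bits, and the same final summation to $\left(2+\log_2 h-\log_2 H_O\right)\M$. Your closing caveat is also well placed: since $k^*<k$ the factor $2h-2k^*$ does not literally collapse against $\frac{1}{h-k}$, and the paper's own \eqref{size-head-bit} glosses over exactly this point, so it is a looseness shared by both arguments rather than a divergence between them.
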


Finally we further improve the lower bound above theorem when $\delta$ is a finite number.
Let $l$ be the largest level where $\delta^{2^l}$ is smaller than $\frac{H_O}{h-l} 2^{h} + 3$. The value of $l$ is 
\begin{equation}\label{l-def}
l <  \lfloor{ \log_2\left(\frac{h-\log_2 \left( \frac{h}{H_O} \right)}{\log_2(\delta)} \right) }\rfloor,
\end{equation}
because 
\begin{multline} \label{l-bounds}
\delta^{2^l}< \delta^{ 2^{\lfloor{\log_2\left(\frac{h-\log_2 \left( \frac{h}{H_O} \right)}{\log_2(\delta)} \right)}\rfloor}}\leq \delta^{ 2^{\log_2\left(\frac{h-\log_2 \left( \frac{h}{H_O} \right)}{\log_2(\delta)} \right)}} =\\
\delta^{ \frac{h-\log_2 \left( \frac{h}{H_O} \right)}{\log_2(\delta)} }=
 \delta^{\frac{\log_2 \left(2^h\right) + \log_2 \left( \frac{H_O}{h} \right)}{\log_2(\delta)}  }= \delta^{\frac{\log_2 \left( \frac{H_O}{h} 2^h\right)}{\log_2(\delta)}  } \\=\delta^{\log_N \left( \frac{H_O}{h} 2^h\right)  } = \frac{H_O}{h} 2^h < \frac{H_O}{h-l} 2^{h} + 3
\end{multline}

Note that $k\geq l+1$, because of the floor an ceiling function and
\[ \log_2 \left( \frac{h}{H_O} \right) > \log_2\left(\frac{h-\log_2 \left( \frac{h}{H_O} \right)}{\log_2(\delta)} \right)\]
and taking both side on power 2 we have
\[  \frac{h}{H_O} > \frac{h-\log_2 \left( \frac{h}{H_O} \right)}{\log_2(\delta)}\]
Note that $H_O\leq \log_2(\delta)$, thus 
\[  h > h-\log_2 \left( \frac{h}{H_O} \right),\]
which always holds.

To count the total space needed to store the DAG we divide it into three parts (see also Figure \ref{fig:DAG-shape})
\begin{description}
\item[\emph{head}] for levels $h,\dots,k$, 
\item[\emph{body}] for levels $k-1,\dots,l+1$,
\item[\emph{tail}] for levels $l,\dots,1$.
\end{description}

To estimate the size of head we use \eqref{size-head-bit}. For the size of the tail we use the bound $\delta^{2^j}$ from \eqref{eq:vjd-ub}.  We have
\begin{multline}\label{size-tail-node}
\sum_{j=1}^{l}  E(|V^j_D|) \leq \sum_{j=1}^{l}  \delta^{2^j} < \sum_{i=1}^{2^l}  \delta^{i} = N \sum_{i=0}^{2^{l}-1}  \delta^{i} =\\= N\frac{\delta^{2^{l}}-1}{\delta-1} < N\frac{\delta^{2^{l}}}{\delta-1}= \frac{\delta}{\delta-1} \left(\frac{H_O}{h-l} 2^{h} + 3\right)
\end{multline}
where the last inequality comes from \eqref{l-bounds}. After multiplying with $2h-2k^*$ bits for each node we have 
\begin{multline}\label{tail-head-bit}
(2h-2k^*) \frac{\delta}{\delta-1} \left(\frac{H_O}{h-l} 2^{h} + 3\right)<\\<\frac{\delta}{\delta-1} \left(2(h-k) \frac{H_O}{h-k} 2^{h} + 12h\right) =  \frac{\delta}{\delta-1} \M
\end{multline}

For the size of body we use Corollary \ref{corollary-dag-width}.
\begin{multline}\label{size-tail-node}
 \sum_{j=l+1}^{k-1} (2h-2k-2) E(|V^j_D|) \leq (k-l-1) \M =\\=
\left( \log_2 \left( \frac{h}{H_O} \right) -  \log_2 \left({\frac{h-\log_2 \left( \frac{h}{H_O} \right)}{\log_2(\delta)}}\right) -1 \right) \M =\\
\left( \log_2 \left( \frac{h}{H_O\left(h-\log_2 \left( \frac{h}{H_O} \right)\right)} \right) + \log\log_2(\delta) -1 \right) \M 
\end{multline}
Finally, summing up with \eqref{tail-head-bit} and \eqref{size-head-bit} we get the following bound.

\begin{theorem}
The expected  number of bits to store the nodes in the DAG resulted by trie-folding of a randomly generated trie with height $h$ and next hop distribution $p_1,\dots,p_N$  is at most
\begin{multline}
\left(1+ \log_2 \left( \frac{h}{h-\log_2 (h) + \log_2 H_O} \right) - \log_2 H_O + \right.\\\left. \log\log_2(\delta) + \frac{\delta}{\delta-1} \right) \left(2 H_O 2^{h} + 6h\right)
\end{multline}
when $H_O\geq \frac{h}{2^h}$ and $\delta$ is a finite number.
\end{theorem}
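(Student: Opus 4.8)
The plan is to reuse the three-region decomposition already set up before the statement: the levels $1,\dots,h$ are split into a \emph{head} ($j=h,\dots,k$), a \emph{body} ($j=k-1,\dots,l+1$), and a \emph{tail} ($j=l,\dots,1$), where $k$ and $l$ are the crossing levels fixed in \eqref{k-def} and \eqref{l-def}. On each region one of the three terms in the minimum of \eqref{eq:vjd-ub} is the smallest, so I would bound the node count on that region by that term and then multiply by the per-node storage cost $2h-2k^*$ fixed when $k^*$ was chosen. Adding the three contributions and collapsing the logarithms should yield the stated coefficient times $\M=2H_O2^h+6h$.

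First I would treat the head exactly as in \eqref{size-head-bit}: on $j=k,\dots,h$ the bound $2^{h-j}$ dominates, the level counts form a geometric series summing to $2^{h-k+1}$, and \eqref{k-bounds} turns this into $<2(\tfrac{H_O}{h-k}2^{h}+6)$; multiplying by $2h-2k^*<2(h-k)$ gives a head contribution below $2\M$. Symmetrically, on the tail $j=1,\dots,l$ the bound $\delta^{2^j}$ dominates; the key manipulation here is to dominate $\sum_{j=1}^{l}\delta^{2^j}$ by the full geometric sum $\sum_{i=1}^{2^l}\delta^{i}$ (legitimate because the exponents $2^1,\dots,2^l$ form a subset of $1,\dots,2^l$), evaluate it in closed form, and invoke \eqref{l-bounds} to replace $\delta^{2^l}$ by $\tfrac{H_O}{h-l}2^{h}+3$. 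As carried out in \eqref{tail-head-bit}, the per-node factor then produces a tail contribution below $\tfrac{\delta}{\delta-1}\M$.

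For the body I would not resort to a geometric sum but simply apply Corollary~\ref{corollary-dag-width}, which caps the cost of \emph{every} level by $\M$. Since the body spans $k-l-1$ levels, its contribution is at most $(k-l-1)\M$, and substituting the values of $k$ and $l$ from \eqref{k-def}--\eqref{l-def} expresses $k-l-1$ as $\log_2\!\big(\tfrac{h}{H_O}\big)-\log_2\!\big(\tfrac{h-\log_2(h/H_O)}{\log_2\delta}\big)-1$. Summing head, body and tail yields the coefficient $2+(k-l-1)+\tfrac{\delta}{\delta-1}$ in front of $\M$; expanding the two logarithms, using $\log_2(h/H_O)=\log_2 h-\log_2 H_O$, and cancelling the $+2$ against the $-1$ from the body count should reduce this to the advertised $1+\log_2\!\big(\tfrac{h}{h-\log_2 h+\log_2 H_O}\big)-\log_2 H_O+\log\log_2\delta+\tfrac{\delta}{\delta-1}$.

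The delicate part is not any single estimate but the bookkeeping that makes the decomposition valid: I must confirm that the three regions genuinely tile $\{1,\dots,h\}$ without overlap (with a possibly empty body when $k=l+1$), i.e. that $1\le l$, $k\ge l+1$ and $k\le h$, and that on each region the claimed term of the minimum really is the dominant one. The ordering $k\ge l+1$ is precisely the inequality verified after \eqref{l-bounds}, relying on $H_O\le\log_2\delta$; the well-definedness of $k$ (equivalently $\log_2(h/H_O)\le h$) is where the hypothesis $H_O\ge h/2^h$ enters, and the finiteness of $\delta$ is what makes the tail threshold $l$ and the factor $\tfrac{\delta}{\delta-1}$ meaningful. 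Once these boundary facts are in place, the remaining work is the routine logarithmic simplification sketched above.
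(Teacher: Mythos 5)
Your proposal is correct and follows the paper's proof essentially step for step: the same head/body/tail tiling at the thresholds $k$ and $l$ from \eqref{k-def} and \eqref{l-def}, with the head bounded via \eqref{size-head-bit} by $2\M$, the tail via the geometric-sum domination and \eqref{l-bounds} as in \eqref{tail-head-bit} by $\frac{\delta}{\delta-1}\M$, the body by $(k-l-1)\M$ via Corollary~\ref{corollary-dag-width}, and the same logarithmic simplification collapsing $2+(k-l-1)+\frac{\delta}{\delta-1}$ into the stated coefficient. The boundary checks you single out ($k\geq l+1$ via $H_O\leq\log_2\delta$, and well-definedness of $k$ from $H_O\geq \frac{h}{2^h}$) are exactly those the paper verifies, so nothing is missing relative to the paper's own argument.
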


Note that the above bound asymptotically leads to
\[
\left(1 - \log_2 H_O + \log\log_2(\delta) + \frac{\delta}{\delta-1} \right) 2 H_O 
\]
bits for each leaf when $h\rightarrow \infty$. For $\delta=2$ it is $H_O<1$ and
\[
\left(6 - 2\log_2 H_O\right) H_O. 
\]

\end{document}